\theoremstyle{definition}
\newtheorem {remark}{Remark}[section]
\theoremstyle{plain}
\newtheorem {theorem}{Theorem}[section]
\newtheorem {conjecture}{Conjecture}[section]
\def\ees{{\accent"5E e}\kern-.385em\raise.2ex\hbox{\char'23}\kern-.08em}
\def\EES{{\accent"5E E}\kern-.5em\raise.8ex\hbox{\char'23 }}
\def\ow{o\kern-.42em\raise.82ex\hbox{
   \vrule width .12em height .0ex depth .075ex \kern-0.16em \char'56}\kern-.07em}
\def\OW{O\kern-.460em\raise1.36ex\hbox{
\vrule width .13em height .0ex depth .075ex \kern-0.16em
\char'56}\kern-.07em}
\title{On the maximum purity of absolutely separable bipartite states}
\author{Hoang Phi Dung$^1$, Vu The Khoi$^2$ }
\address{$^1$Department of Mathematics - Faculty of Fundamental Sciences,
	\newline \indent  Posts and Telecommunications Institute of Technology
	\newline \indent 
	Km10 Nguyen Trai Rd., Ha Dong District, Hanoi, Vietnam}
\email{dunghp@ptit.edu.vn}
\address{$^2$Institute of Mathematics, Vietnam Academy of Science and Technology, 	18 Hoang Quoc Viet, Ha Noi, Viet Nam}
\email{vtkhoi@math.ac.vn}
\keywords{Purity; absolute separability; absolute PPT; quantum entanglement}
\date{ \today}
\begin{document}
\maketitle

\begin{abstract}
In this study, we investigate the problem of determining the maximum  purity for absolutely separable and absolutely PPT quantum states. From the geometric viewpoint, this problem is equivalent to asking for the exact Euclidean radius of the smallest ball around the maximally mixed state  that encompasses
the set of all absolutely separable or absolutely PPT states. Our results provide an analytic solution for two qubit states. Based on numerical computation, we propose a conjectured maximum purity for absolutely separable qubit-qudit states and absolutely PPT qutrit-qudit states.       

\end{abstract}

\section{Introduction}
Quantum entanglement is a fundamental phenomenon in quantum physics that plays a crucial role in both the theoretical and practical aspects of quantum information science. It describes a unique correlation between particles that remain interconnected, regardless of the distance separating them, allowing for the instantaneous transfer of information when the state of one particle is measured. Quantum entanglement serves a key resource in quantum communication and quantum computation \cite{Horodecki2009}.

In the context of quantum information theory, a state is said to be separable if it can be written as a convex combination of product states. Mathematically, a bipartite quantum state 
$\rho$ on a Hilbert space $\mathcal{H}=\mathcal{H}_A \otimes \mathcal{H}_B$, is called {\em separable} \cite{Schrodinger1935, Werner1989} if it can be  expressed as:
\begin{equation}\label{def-SEP}
	\rho = \sum_{i=1}^{k} p_i\rho_A^i \otimes {\rho}_B^i,	
\end{equation}
where $p_i\ge 0$ and $\sum_{i=1}^k p_i=1 ,$ and $\rho_A^i$  and ${\rho}_B^i$ are density matrices for systems $A$ and $B$ respectively. If $\rho$ is not a separable state, then $\rho$ is called an {\em entangled} state.

A state $\rho$ on $\mathcal{H}=\mathcal{H}_A \otimes \mathcal{H}_B$ is called {\it absolutely separable} \cite{Kus2001} if it remains separable under all possible global unitary transformations, that is $U\rho U^*$ is separable for all unitary operators $U: \mathcal{H}_A \otimes \mathcal{H}_B \rightarrow \mathcal{H}_A \otimes \mathcal{H}_B.$ Similarly, $\rho$ is termed {\it absolutely positive partial transpose} (absolutely PPT) if $U\rho U^*$ is PPT for all unitary operators $U: \mathcal{H}_A \otimes \mathcal{H}_B \rightarrow \mathcal{H}_A \otimes \mathcal{H}_B.$   

Absolutely separable states form a subset of all separable states and are of interest because they are never capable of becoming entangled through any quantum operation. This property makes absolute separability important in the study of quantum entanglement, as they represent the "least quantum" of all quantum states. Understanding these states can help clarify the boundaries between classical and quantum correlations.

Many studies have been devoted to the geometric structure of absolutely separable states. A central problem in the geometric study is to understand how this set fits within the entire state space and how ``large" it is compare to to the set of separable states. 
A notable result is that the largest ball of separable as well as absolutely separable states
around the maximally mixed state in a $n$-dimensional system has a radius of $\frac{1}{\sqrt{n(n-1)}}$ with respect to Hilbert-Schmidt metric \cite{Kus2001, Gurvits2002}. This ball is often referred to as the {\em maximal ball} and can also be described by the purity bound  ${\rm tr}(\rho^2)\le \frac{1}{n-1}.$ However, it was known that there are also absolutely separable states outside of this ball \cite{Kus2001}. This raises a natural question: what is the radius of the smallest ball around the maximally mixed state that encompasses all absolutely separable states? We can also rephrase this question as: what is the maximum purity of absolutely separable states?

Several authors have explored various aspects of this problem and related topics. The largest purity of absolutely separable two-qubit states was computed numerically in \cite{Slater2009}. An Upper bound for the purity of separable states with zero discord was established in \cite{Adhikari2021}, while  upper bounds for the purity and eigenvalues of absolutely PPT states was studied in \cite{Jivulescu2015,Filipop2017,Xiong2024}. 

In this paper, we derive the analytic maximum value of the purity for absolutely separable two qubit states. For higher dimensions, we numerically solve the optimization problem using MATLAB. Based on these numerical results, we formulate a conjecture about the maximum purity of absolutely separable qubit-qudit states and absolutely PPT qutrit-qudit states. 

The rest of this article is organized as follows: In the next section, we present known preliminary results regarding the characterization of absolutely separable and absolutely PPT states in terms of their spectrum. In Section 3, we find the maximum purity of absolutely separable two-qubit states and present our numerical results and conjectured maximum purity for qubit-qudit and qutrit-qudit states. Section 4 concludes.  
\section{Preliminaries}

Let $\rho$ be a mixed state on an $m\otimes n$ system $\mathcal{H}=\mathcal{H}_A \otimes \mathcal{H}_B,$ where $\dim \mathcal{H}_A=m$ and $\dim \mathcal{H}_B=n.$ Then, $\rho$ can be represented by a density matrix in $M_m\otimes M_n,$ where $M_n$ denotes the space of $n \times n$  matrices.  The {\em absolute separability problem} \cite{Kus2001, Hildebrand2007,Johnston2013, Arunachalam2015} (also called the separability from spectrum problem)  seeks to characterize and identify absolutely separable quantum states $\rho$   in terms of their spectrum. In the following, we state known criteria for absolutely separability and absolutely PPT  which will be used in the next section.    

The set of absolutely separable states is a subset of the set of absolutely PPT states and these two sets coincide in the case of the qubit-qudit system \cite{Johnston2013}. In this case, there is a simple criterion for determining whether a state is absolutely separable (absolutely PPT) based on its spectrum \cite{Hildebrand2007,Johnston2013, Arunachalam2015}. That is,  a state $\rho \in M_2 \otimes M_n$ is absolutely separable if and only if 
\begin{equation}\label{Eq2}
	\lambda_1 \leq \lambda_{2n-1} + 2\sqrt{\lambda_{2n-2}\lambda_{2n}},
\end{equation}
where $\lambda_1 \ge \lambda_2 \ge \cdots \ge \lambda_{2n}$ are the eigenvalues of $\rho$ in the decreasing order. 
Aside from the cases mentioned above, no known necessary and sufficient criterion exists for absolute separability.

For the qutrit-qudit system,  Hildebrand \cite[Corollary V.3]{Hildebrand2007} provided the following explicit criterion for absolute PPT: $\rho$ is absolutely PPT if and only if
{\small 
	\begin{equation} \label{Eq3}
		\begin{bmatrix}
			2\lambda_{3n} & \lambda_{3n-1} - \lambda_1 & \lambda_{3n-3} - \lambda_2\\
			\lambda_{3n-1} - \lambda_1 & 2\lambda_{3n-2} & \lambda_{3n-4} - \lambda_3\\
			\lambda_{3n-3} - \lambda_2 & \lambda_{3n-4} - \lambda_3 & 2\lambda_{3n-5}
		\end{bmatrix} \succeq 0,
		\begin{bmatrix}
			2\lambda_{3n} & \lambda_{3n-1} - \lambda_1 & \lambda_{3n-2} - \lambda_2\\
			\lambda_{3n-1} - \lambda_1 & 2\lambda_{3n-3} & \lambda_{3n-4} - \lambda_3\\
			\lambda_{3n-2} - \lambda_2 & \lambda_{3n-4} - \lambda_3 & 2\lambda_{3n-5}
		\end{bmatrix} \succeq 0, 
\end{equation} }
where $\lambda_1 \ge \lambda_2 \ge \cdots \ge \lambda_{3n}$ are eigenvalues of $\rho$ in the decreasing order and $\succeq 0$ indicates positive semidefiniteness.
It remains unknown whether the set of absolutely separable states coincides with the set of absolutely PPT states for the qutrit-qudit system \cite{Arunachalam2015}.  More recently, necessary and sufficient conditions conditions for absolute PPT of $4\otimes n$ systems are given in \cite{Xiong2024}. The readers are referred to \cite{Bengtsson2009, Aubrun2017} for backgrounds on the geometric studies of quantum states.
\section{Maximum purity of absolutely separable and absolutely PPT states}
\subsection{The two-qubit case}
In this subsection, we find the maximum value of the purity of absolutely separable two-qubit states. Our result confirms the numerical prediction by Slater \cite{Slater2009}.
\begin{theorem}\label{purity}
	The maximum purity of absolutely separable two-qubit states is $\dfrac{3}{8}$ and this maximum value is attained iff the  eigenvalues  $\lambda_1 = \lambda_2 = \dfrac{1}{4} + \dfrac{1}{4\sqrt{2}}$ and $\lambda_3 = \lambda_4=\dfrac{1}{4} - \dfrac{1}{4\sqrt{2}}$.
\end{theorem}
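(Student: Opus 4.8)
The plan is to translate the statement into a constrained optimization problem over the spectrum and solve it. Since the purity ${\rm tr}(\rho^2)=\sum_{i=1}^4\lambda_i^2$ and absolute separability both depend only on the eigenvalues, the criterion \eqref{Eq2} with $n=2$ turns the problem into: maximize $P(\lambda)=\lambda_1^2+\lambda_2^2+\lambda_3^2+\lambda_4^2$ subject to $\lambda_1\ge\lambda_2\ge\lambda_3\ge\lambda_4\ge 0$, the normalization $\lambda_1+\lambda_2+\lambda_3+\lambda_4=1$, and the separability constraint $\lambda_1\le\lambda_3+2\sqrt{\lambda_2\lambda_4}$. Every admissible decreasing tuple is the spectrum of some density matrix, so the maximal purity equals the maximum of $P$ over this feasible set $K$.

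First I would record two structural facts. The geometric mean $(\lambda_2,\lambda_4)\mapsto\sqrt{\lambda_2\lambda_4}$ is concave, so $\lambda_3+2\sqrt{\lambda_2\lambda_4}-\lambda_1$ is concave and $K$ is convex and compact, while $P$ is convex; hence the maximum of $P$ is attained at an extreme point of $K$. Next I would show the separability constraint is active at any maximizer: if $\lambda_1<\lambda_3+2\sqrt{\lambda_2\lambda_4}$ and $\lambda_4>0$, then replacing $(\lambda_1,\lambda_4)$ by $(\lambda_1+\varepsilon,\lambda_4-\varepsilon)$ keeps every constraint valid for small $\varepsilon>0$ and raises $P$ by $2\varepsilon(\lambda_1-\lambda_4)+2\varepsilon^2>0$, a contradiction. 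The only remaining corner, $\lambda_4=0$, forces $\lambda_1=\lambda_2=\lambda_3=1/3$ and $P=1/3<3/8$, so it is not the maximum.

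On the separability surface I would eliminate $\lambda_1=\lambda_3+2\sqrt{\lambda_2\lambda_4}$ and pass to the variables $s=\sqrt{\lambda_2}+\sqrt{\lambda_4}$ and $p=\sqrt{\lambda_2\lambda_4}$. Normalization then gives $\lambda_3=(1-s^2)/2$, and a direct computation reduces the objective to
\[
P=6p^2+(2-6s^2)p+\Big(\tfrac12-s^2+\tfrac32 s^4\Big).
\]
For each fixed $s$ this is a quadratic in $p$ with positive leading coefficient, hence convex, so its maximum over the feasible $p$-interval is attained at an endpoint. The endpoints are precisely the values of $p$ at which one further ordering inequality becomes an equality, namely $\lambda_1=\lambda_2$, or $\lambda_2=\lambda_3$, or $\lambda_3=\lambda_4$. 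Thus the search reduces to three one-parameter edge families.

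Finally I would optimize along each edge. A short single-variable analysis shows that on the families $\lambda_1=\lambda_2$ and $\lambda_3=\lambda_4$ the objective increases up to their common boundary point, where both equalities hold, whereas the family $\lambda_2=\lambda_3$ yields a strictly smaller value. Imposing $\lambda_1=\lambda_2=a$ and $\lambda_3=\lambda_4=c$ together with normalization ($c=\tfrac12-a$) and the active constraint $a=c+2\sqrt{ac}$ gives $2a^2-a+\tfrac1{16}=0$, whence $a=\tfrac14+\tfrac1{4\sqrt2}$, $c=\tfrac14-\tfrac1{4\sqrt2}$, and $P=2a^2+2c^2=\tfrac38$; this also pins down the maximizing spectrum and its uniqueness. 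I expect the main obstacle to be the case analysis in the last two steps: correctly determining which ordering constraints are active at the optimum and verifying, against the nonlinear separability constraint, that the vertex $\lambda_1=\lambda_2,\ \lambda_3=\lambda_4$ beats every other edge configuration, so that $3/8$ is the global maximum rather than merely a local one.
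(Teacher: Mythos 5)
Your strategy is viable and genuinely different from the paper's. You recast the problem as maximizing the convex function $P=\sum_i\lambda_i^2$ over the convex compact spectral body cut out by \eqref{Eq2}, argue by perturbation that the separability constraint must be active at a maximizer, and then push the optimum onto the curves where an ordering constraint is also tight. The pieces you actually compute are correct: concavity of the geometric mean does make the feasible set convex, the move $(\lambda_1,\lambda_4)\mapsto(\lambda_1+\varepsilon,\lambda_4-\varepsilon)$ does raise $P$ by $2\varepsilon(\lambda_1-\lambda_4)+2\varepsilon^2$, the reduction $P=6p^2+(2-6s^2)p+\bigl(\tfrac12-s^2+\tfrac32 s^4\bigr)$ checks out, and the terminal vertex computation gives $2a^2-a+\tfrac1{16}=0$, hence $a=\tfrac14+\tfrac1{4\sqrt2}$ and $P=\tfrac38$. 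The genuine gap is exactly the step you flag: the ``short single-variable analysis'' of the three edge families $\lambda_1=\lambda_2$, $\lambda_2=\lambda_3$, $\lambda_3=\lambda_4$ is where all the remaining work lives. Each edge is carved out by the nonlinear active constraint together with normalization, so parametrizing it and establishing monotonicity of $P$ along it (and strictness, which the ``iff'' part of the theorem needs) is not a one-line check. As written you have verified that $\tfrac38$ is attained, but not that every other boundary configuration does worse.

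The paper sidesteps this case analysis entirely with one algebraic identity. Rewriting \eqref{Eq2} as $\tfrac12(\lambda_1-\lambda_3)^2\le 2\lambda_2\lambda_4$ and completing squares yields
\[
{\rm tr}(\rho^2)=\Bigl[\tfrac12(\lambda_1-\lambda_3)^2-2\lambda_2\lambda_4\Bigr]+\tfrac16\bigl[2-3(\lambda_1+\lambda_3)\bigr]^2+\tfrac13\le \tfrac16\bigl[2-3(\lambda_1+\lambda_3)\bigr]^2+\tfrac13,
\]
after which elementary spectral inequalities confine $t=\lambda_1+\lambda_3$ to $\bigl[\tfrac12,\tfrac34\bigr]$ and a single quadratic in $t$ is maximized at $t=\tfrac12$ with value $\tfrac38$; the equality conditions then deliver the unique maximizing spectrum for free. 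Your approach buys a systematic optimization framework that would generalize in principle, but at the cost of a multi-case boundary analysis that you still have to carry out; if you pursue it, complete the three edge optimizations explicitly, or else look for a one-variable reduction in the spirit of \eqref{Eq4}.
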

\begin{proof}
	From the inequality in (\ref{Eq2}), we deduce that $\frac{1}{2}(\lambda_1 - \lambda_3)^2\le 2\lambda_2\lambda_4$. Hence, combine with $\lambda_1 + \lambda_2 + \lambda_3 + \lambda_4 = 1$, we can bound the purity as follows.
	\begin{equation} \label{Eq4}
		\begin{aligned}
			{\rm tr}(\rho^2) &= \lambda_1^2 + \lambda_2^2 + \lambda_3^2 + \lambda_4^2\\ 
			&=\lambda_1^2 + \lambda_3^2 + (\lambda_2 + \lambda_4)^2 - 2\lambda_2\lambda_4\\
			&=\lambda_1^2 + \lambda_3^2 + (1 - \lambda_1 - \lambda_3)^2 - 2\lambda_2\lambda_4\\
			&= \left[\frac{1}{2}(\lambda_1 - \lambda_3)^2 - 2\lambda_2\lambda_4\right] + \frac{1}{6}[2 - 3(\lambda_1 + \lambda_3)]^2 + \frac{1}{3} \\
			&\le \frac{1}{6}[2 - 3(\lambda_1 + \lambda_3)]^2 + \frac{1}{3}. 
		\end{aligned}
	\end{equation}
	Since $\lambda_1 \ge \lambda_2\ge \lambda_3  \ge \lambda_4$, it follows that $2\lambda_1 + 2\lambda_3 \ge \lambda_1 + \lambda_2 + \lambda_3 + \lambda_4 = 1$. Thus, 
	\begin{equation}\label{Eq5}
		\lambda_1 + \lambda_3 \ge \frac{1}{2}.   	
	\end{equation}
	On the other hand, $1 - \lambda_1=\lambda_2 + \lambda_3 + \lambda_4 \ge 2\lambda_3 + \lambda_4 \ge 2\lambda_3$. So 
	\begin{equation}\label{Eq6}
		\lambda_3 \le \frac{1 - \lambda_1}{2}.	
	\end{equation}
	Again, by the inequality (\ref{Eq2}), we get $\lambda_1 \le \lambda_3 + 2\sqrt{\lambda_2\lambda_4} \le \lambda_3 + \lambda_2+\lambda_4 = 1 - \lambda_1$. It follows that 
	\begin{equation}\label{Eq7}
		\lambda_1 \le \frac{1}{2}.
	\end{equation}
	Using \eqref{Eq6} and \eqref{Eq7}, we get 
	\begin{equation}\label{Eq8}
		\lambda_1 + \lambda_3 \le \lambda_1 + \frac{1 - \lambda_1}{2} = \frac{1}{2} + \frac{\lambda_1}{2} \le \frac{1}{2} + \frac{1}{4} = \frac{3}{4}
	\end{equation}
	Combining \eqref{Eq5} and \eqref{Eq8}, it follows that $\lambda_1 + \lambda_3 \in \left[\dfrac{1}{2};\dfrac{3}{4}\right]$.
	
	Plugging $t = \lambda_1 + \lambda_3$ into the left-hand side of the last inequality of \eqref{Eq4},  we obtain the following function $$f(t) = \frac{1}{6}(2-3t)^2 + \frac{1}{3},\ \ t \in \left[\dfrac{1}{2};\dfrac{3}{4}\right].$$
	It is easy to see that the quadratic function $f(t),\ t \in \left[\dfrac{1}{2};\dfrac{3}{4}\right],$ attains its maximum value  of $\dfrac{3}{8}$, precisely when $t = \dfrac{1}{2}$. So the maximum value of the ${\rm tr}(\rho^2)$ is $\dfrac{3}{8}$ as required.
	
	Moreover, the maximum value attains precisely when both equalities in \eqref{Eq4} and \eqref{Eq5} hold. Thus, we derive the following system of equations  
	\begin{center}
		$\begin{cases}
			\lambda_1 &= \lambda_3+ 2 \sqrt{\lambda_2\lambda_4}\\
			\lambda_1 &= \lambda_2\\
			\lambda_3 &=\lambda_4\\
			\lambda_1 + \lambda_3 &= \frac{1}{2}
		\end{cases}$  
	\end{center}  
	The system has a unique solution:    
	$\lambda_1 = \lambda_2 = \frac{1}{4} + \frac{1}{4\sqrt{2}}$ and $\lambda_3 = \lambda_4 = \frac{1}{4} - \frac{1}{4\sqrt{2}}.$ So the theorem is proved.
	
\end{proof}
\subsection{The qubit-qudit case}
In this subsection, we present our numerical results on determining the maximum value of the purity of absolutely separable qubit-qudit states.  We used MATLAB's Optimization Toolbox, specifically the "fmincon" function, to maximize the purity function under the constraint in equation \eqref{Eq2}. These results are summarized in Table 1 below.
\begin{small}
	\begin{table}[!ht]\label{table1}
		\begin{tabular}{|l|l|l|l|l|}
			\hline
			$ \text{Dimension} $ & \text{Max. purity} &  \text{Eigenvalues}  \\ \hline

			$2 \otimes 3$ & $0.22$   & $\lambda_1 = \lambda_2 = 0.3, \lambda_3 = \lambda_4 = \lambda_5 = \lambda_6 = 0.1$\\  \hline
			
			$2 \otimes 4$ & $0.1667$  & $\lambda_1 = \lambda_2 = 0.25, \lambda_3 = \dots = \lambda_8 = 0.8333$ \\ \hline
			
			$2 \otimes 5$ & $0.1328$  & $\lambda_1 = \lambda_2 = \lambda_3= 0.1875, \lambda_4 = \dots = \lambda_{10} = 0.0625$ \\ \hline   		
			$2 \otimes 6$ & $0.11$    & $\lambda_1 =\lambda_2= \lambda_3= 0.15, \lambda_4 = \dots = \lambda_{12} = 0.05$  \\ \hline 
			
			$2 \otimes 7$ & $0.095$   & $\lambda_1 = \dots = \lambda_4 = 0.1368, \lambda_5 = \dots = \lambda_{14} = 0.0455$ \\ \hline 
			
			$2 \otimes 8$ & $0.0828$   & $\lambda_1 = \dots = \lambda_4 = 0.1154, \lambda_5 = \dots = \lambda_{16} = 0.0385$   \\ \hline 
			
			$2 \otimes 9$ & $0.0693$  &  $\lambda_1 = 0.0992, \lambda_2 = 0.0982, \lambda_3 = 0.0969, \lambda_4=0.0949$,    \\ 
			
			&  &  $\lambda_5 = 0.0918, \lambda_6 = 0.0850, \lambda_7=0.0429, \lambda_8 = 0.0397$   \\ 
			
			&  &  $\lambda_9 = 0.0381, \lambda_{10} = 0.0371, \lambda_{11}=0.0363, \lambda_{12} = 0.0357$   \\ 
			&  &  $\lambda_{13} = 0.0351, \lambda_{14} = 0.0347, \lambda_{15}=0.0343, \lambda_{16} = 0.0340$  \\ 
			&  &  $\lambda_{17} = 0.0335, \lambda_{18} = 0.0326$ \\ 
			\hline
			
			$2 \otimes 10$ & $0.0618$  &  $\lambda_1 = 0.0912, \lambda_2 = 0.0902, \lambda_3 = 0.0889, \lambda_4=0.0871$,    \\ 
			
			&  &  $\lambda_5 = 0.0842, \lambda_6 = 0.0778, \lambda_7=0.0441, \lambda_8 = 0.0396$   \\ 
			
			&  &  $\lambda_9 = 0.0374, \lambda_{10} = 0.0360, \lambda_{11}=0.0349, \lambda_{12} = 0.0341$   \\ 
			&  &  $\lambda_{13} = 0.0335, \lambda_{14} = 0.0329, \lambda_{15}=0.0324, \lambda_{16} = 0.0320$  \\ 
			&  &  $\lambda_{17} = 0.0316, \lambda_{18} = 0.0313, \lambda_{19} = 0.0308, \lambda_{20} = 0.0298$ \\
			\hline
			
		\end{tabular}
		\vskip0.5cm
		\caption{Numerical estimates of the maximum purity of absolutely separable qubit-qudit states}
	\end{table}
\end{small}

On the basis of these numerical estimates, we propose the following conjecture regarding the exact maximum purity.

\begin{conjecture} \label{conj1} The maximum purity of  absolutely separable states $\rho \in M_2\otimes M_n, n\ge 3,$ is given by
	$$\begin{cases}
		\frac{2}{3n} \ &\text{if}\ n \ \text{is even,}  \\
		
		\frac{6n+4}{(3n+1)^2} \ &\text{if}\ n \ \text{is odd.}
	\end{cases}$$
	Furthermore, this maximum value is reached when $$\begin{cases}
		\lambda_1= \cdots =\lambda_{\frac{n}{2}}=\frac{1}{n}, \lambda_{\frac{n}{2}+1}=\cdots=\lambda_{2n}=\frac{1}{3n}  \ &\text{if}\ n \ \text{is even,}  \\
		
		\lambda_1= \cdots =\lambda_{\frac{n+1}{2}}=\frac{3}{3n+1}, \lambda_{\frac{n+3}{2}}=\cdots=\lambda_{2n}=\frac{1}{3n+1} \ &\text{if}\ n \ \text{is odd.}
	\end{cases}$$ 
\end{conjecture}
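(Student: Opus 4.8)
The plan is to read the statement as the constrained optimization problem of maximizing the purity $\sum_{i=1}^{2n}\lambda_i^2$ over the compact set cut out by $\sum_{i=1}^{2n}\lambda_i=1$, the ordering $\lambda_1\ge\cdots\ge\lambda_{2n}\ge 0$, and the absolute separability criterion \eqref{Eq2}; a maximizer exists by compactness. The strategy is to prove that every maximizer has the two-level shape asserted in the conjecture—a top block of $k$ equal eigenvalues $a$ and a bottom block of $2n-k$ equal eigenvalues $b$ with \eqref{Eq2} tight—and then to optimize over the integer $k$. Granting this structure, the rest is elementary: with the three smallest eigenvalues equal to $b$, criterion \eqref{Eq2} reads $a\le 3b$, and tightness gives $a=3b$. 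Imposing $ka+(2n-k)b=1$ yields $b=\frac{1}{2(k+n)}$, $a=\frac{3}{2(k+n)}$, hence
\begin{equation*}
  P(k)=ka^2+(2n-k)b^2=\frac{4k+n}{2(k+n)^2}.
\end{equation*}

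Treating $k$ as a real variable, $P'(k)=\frac{n-2k}{(k+n)^3}$, so $P$ increases up to $k=n/2$ and decreases afterward. For even $n$ the optimum is the integer $k=n/2$, giving $\frac{2}{3n}$ with the stated eigenvalues. For odd $n$ one compares the two integers $k=\frac{n\pm1}{2}$ straddling $n/2$ (both leave a bottom block of size at least $3$ for $n\ge 3$); writing $u=3n$, the identity $(2u+4)(u-1)^2-(2u-4)(u+1)^2=8>0$ shows $k=\frac{n+1}{2}$ wins, giving $\frac{6n+4}{(3n+1)^2}$. A direct check that the two proposed spectra satisfy \eqref{Eq2} with equality establishes feasibility and achievability, which already proves the lower bound half of the conjecture.

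The main obstacle is justifying the two-level structure, and here the genuine nonlinearity of \eqref{Eq2} must be respected. The middle eigenvalues $\lambda_2,\dots,\lambda_{2n-3}$ enter only through the (linear) ordering and trace constraints; fixing the four coordinates $\lambda_1,\lambda_{2n-2},\lambda_{2n-1},\lambda_{2n}$ that appear in \eqref{Eq2} together with the partial sum of the middle block, the remaining feasible set is a polytope on which the strictly convex partial purity is maximized at a vertex. At such a vertex every middle eigenvalue equals $\lambda_1$ or $\lambda_{2n-2}$ except for at most one intermediate level, collapsing the spectrum to a top block, at most one intermediate level, and the bottom triple $(\lambda_{2n-2},\lambda_{2n-1},\lambda_{2n})$. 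It then remains to absorb the intermediate level into a block and to show the bottom triple equalizes with \eqref{Eq2} tight—and this last step is the crux, since the square root couples the bottom eigenvalues nonlinearly. A tempting shortcut, replacing \eqref{Eq2} by the linear relaxation $\lambda_1\le\lambda_{2n-2}+\lambda_{2n-1}+\lambda_{2n}$ via $2\sqrt{\lambda_{2n-2}\lambda_{2n}}\le\lambda_{2n-2}+\lambda_{2n}$, fails: already for $n=3$ the linearized polytope has the vertex $(\tfrac27,\tfrac27,\tfrac17,\tfrac17,\tfrac17,0)$ of purity $\tfrac{11}{49}>\tfrac{22}{100}$, which violates the true criterion because the square-root term vanishes when $\lambda_{2n}=0$. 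Any correct argument must therefore keep the square root and explicitly exclude such over-flat, degenerate spectra; I would do this by a Lagrange/KKT analysis of the bottom triple showing that, with the top value fixed, equalizing the three smallest eigenvalues and saturating \eqref{Eq2} simultaneously maximizes the purity, and by noting that $\lambda_{2n}=0$ forces $\lambda_1=\lambda_{2n-1}$ and hence an essentially flat spectrum of strictly smaller purity. Carrying out this bottom-block analysis uniformly in $n$ is the part I expect to be genuinely delicate.
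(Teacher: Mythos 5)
First, a point of context: the paper does not prove this statement. It is stated as a conjecture, supported only by numerical optimization (the ``fmincon'' routine), so there is no proof of record to compare yours against; you are attempting to settle what the authors leave open. Within that framing, the parts of your argument that are actually carried out are correct. The two proposed spectra are feasible for \eqref{Eq2} (with equality), which gives the lower-bound half. Inside the two-block ansatz with $a=3b$, your formula $P(k)=\frac{4k+n}{2(k+n)^2}$, its maximization at $k=n/2$, and the comparison of $k=\frac{n\pm1}{2}$ for odd $n$ via the identity $(2u+4)(u-1)^2-(2u-4)(u+1)^2=8$ all check out and reproduce the conjectured values and spectra. Your observation that the linear relaxation $2\sqrt{\lambda_{2n-2}\lambda_{2n}}\le\lambda_{2n-2}+\lambda_{2n}$ is genuinely lossy --- the vertex $(\tfrac27,\tfrac27,\tfrac17,\tfrac17,\tfrac17,0)$ has purity $\tfrac{11}{49}>\tfrac{22}{100}$ but violates \eqref{Eq2} --- is a valuable negative result showing any proof must engage the square root.

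However, the proof is incomplete, and at the exact point you yourself flag. The upper-bound half requires showing that every maximizer of the purity over the full feasible set has the two-level form with \eqref{Eq2} saturated, and this is only sketched. The convexity/vertex argument does legitimately collapse the middle eigenvalues $\lambda_2,\dots,\lambda_{2n-3}$ to the values $\lambda_1$ or $\lambda_{2n-2}$ up to one intermediate level, since they enter only linear constraints. But the remaining steps --- absorbing the intermediate level, proving that the bottom triple $\lambda_{2n-2},\lambda_{2n-1},\lambda_{2n}$ equalizes at the optimum, and proving that \eqref{Eq2} is tight there --- are stated as intentions, not arguments. The deferred KKT analysis is not routine: the constraint gradient is linear in $\lambda_{2n-1}$ but involves $\sqrt{\lambda_{2n}/\lambda_{2n-2}}$ and $\sqrt{\lambda_{2n-2}/\lambda_{2n}}$ in the other two coordinates, so stationarity does not immediately force the three to coincide, and the multipliers attached to the ordering constraints must be tracked carefully (your degenerate case $\lambda_{2n}=0$, which forces the near-flat spectrum of purity $\frac{1}{2n-1}<\frac{2}{3n}$ for $n\ge 3$, is handled correctly, but it is only one boundary stratum among several). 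As it stands you have established that the conjectured values are attained and that they are optimal within the two-block family; the claim that nothing outside that family does better remains unproven, so the conjecture is still open.
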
 
We compare the numerical values and the conjectured values in Figure 1 below.
\begin{figure}[h!]
	\centering
	\includegraphics[height=7cm,width=10cm]{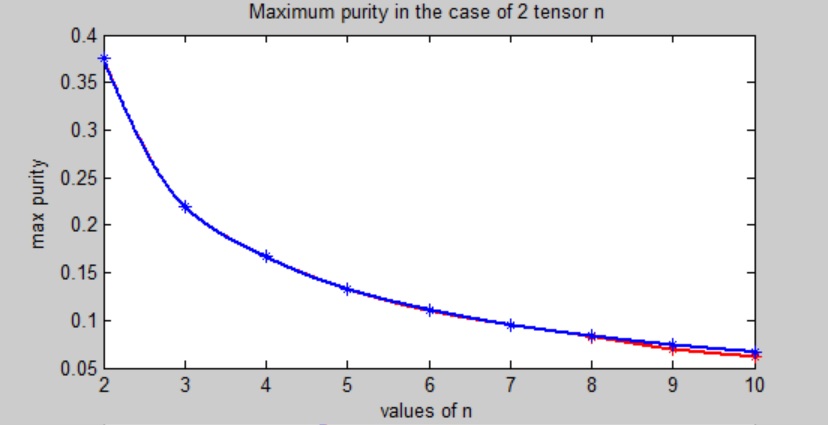}
	\caption{Plot of the maximum purity of absolutely separable states versus the dimension $n$ in a $2\otimes n$ system, where the red curve represents the numerical values and the  blue curve represents the conjectured values.}\label{Plot_ASEP2d}
\end{figure}
\subsection{The qutrit-qudit case} 
We maximize the purity function for absolutely PPT states using the constraints in \eqref{Eq3}, again utilizing the MATLAB “fmincon” function. The results are presented in Table 2.
\begin{small}
	\begin{table}\label{table2}
		\begin{tabular}{|l|l|l|l|l|}
			\hline
			Dimension & Max. purity  & Eigenvalues \\ \hline
		  $3 \otimes 2$ & $0.22$   & $\lambda_1 = \lambda_2 = 0.3, \lambda_3 = \lambda_4 = \lambda_5 = \lambda_6 = 0.1$\\  \hline
            $3 \otimes 3$ & $0.1405$   & $\lambda_1=0.2727, \lambda_2 = \dots = \lambda_{9} = 0.0909$           \\ \hline
			
			$3 \otimes 4$ & $0.0910$  &  $\lambda_1=0.1324, \lambda_2=0.1175, \lambda_3 =0.1093, \lambda_4=0.1040,$ \\ 
			
			&   &  $\lambda_5=0.0843, \lambda_6=0.0758, \lambda_7 =0.0723, \lambda_8=0.0695,$ \\
			
			&   &  $\lambda_9=0.0664, \lambda_{10}=0.0628, \lambda_{11} =0.0588, \lambda_{12}=0.0467$ \\ \hline  
			
			$3 \otimes 5$ & $0.0717$  &  $\lambda_1=0.1007, \lambda_2=0.0930, \lambda_3 =0.0887, \lambda_4=0.0856,$ \\ 
			
			&   &  $\lambda_5=0.0806, \lambda_6=0.0726, \lambda_7 =0.0652, \lambda_8=0.0607,$ \\
			
			&   &  $\lambda_9=0.0578, \lambda_{10}=0.0558, \lambda_{11} =0.0539, \lambda_{12}=0.0518$ \\ 
			
			&   &  $\lambda_{13}=0.0491, \lambda_{14}=0.0458, \lambda_{15} =0.0385$ \\ \hline
			
			$3 \otimes 6$ & $0.0625$  &  $\lambda_1=\dots = \lambda_6 = 0.0833, \lambda_7 = \dots = \lambda_{18} = 0.0417$ \\ \hline
			
			$3 \otimes 7$ & $0.0535$  &  $\lambda_1=\dots = \lambda_6 = 0.0690, \lambda_7 = \lambda_{8} = 0.0689,$ \\ 
			
			&   &  $\lambda_9 = \dots = \lambda_{21} = 0.0345$ \\\hline
			
			$3 \otimes 8$ & $0.0456$  &  $\lambda_1=0.0596, \lambda_2=0.0587, \lambda_3 =0.0583, \lambda_4=0.0580,$ \\ 
			
			&   &  $\lambda_5=0.0576, \lambda_6=0.0571, \lambda_7 =0.0565, \lambda_8=0.0557,$ \\
			
			&   &  $\lambda_9=0.0542, \lambda_{10}=0.0506, \lambda_{11} =0.0347, \lambda_{12}=0.0331,$ \\ 
			
			&   &  $\lambda_{13}=0.0323, \lambda_{14}=0.0318, \lambda_{15} =0.0314, \lambda_{16} = 0.0311,$ \\
			
			&   &  $\lambda_{17}=0.0309, \lambda_{18}=0.0307, \lambda_{19} =0.0305, \lambda_{20} = 0.0303,$ \\ 
			
			&   &  $\lambda_{21}=0.0300, \lambda_{22}=0.0296, \lambda_{23} =0.0291, \lambda_{24} = 0.0282$ \\ \hline
			
			$3 \otimes 9$ & $0.0402$  &  $\lambda_1=0.0534, \lambda_2=0.0525, \lambda_3 =0.0521, \lambda_4=0.0517,$ \\ 
			
			&   &  $\lambda_5=0.0514, \lambda_6=0.0509, \lambda_7 =0.0504, \lambda_8=0.0496,$ \\
			
			&   &  $\lambda_9=0.0485, \lambda_{10}=0.0465, \lambda_{11} =0.0406, \lambda_{12}=0.0334,$ \\ 
			
			&   &  $\lambda_{13}=0.0313, \lambda_{14}=0.0302, \lambda_{15} =0.0295, \lambda_{16} = 0.0290,$ \\
			
			&   &  $\lambda_{17}=0.0286, \lambda_{18}=0.0282, \lambda_{19} =0.0280, \lambda_{20} = 0.0277,$ \\ 
			
			&   &  $\lambda_{21}=0.0275, \lambda_{22}=0.0274, \lambda_{23} =0.0271, \lambda_{24} = 0.0269$ \\ 
			
			&   &  $\lambda_{25}=0.0265, \lambda_{26}=0.0260, \lambda_{27} =0.0252$ \\ \hline
			
			$3 \otimes 10$ & $0.0360$  &  $\lambda_1=0.0490, \lambda_2=0.0481, \lambda_3 =0.0476, \lambda_4=0.0473,$ \\ 
			
			&   &  $\lambda_5=0.0469, \lambda_6=0.0464, \lambda_7 =0.0457, \lambda_8=0.0449,$ \\
			
			&   &  $\lambda_9=0.0435, \lambda_{10}=0.0412, \lambda_{11} =0.0369, \lambda_{12}=0.0327,$ \\ 
			
			&   &  $\lambda_{13}=0.0304, \lambda_{14}=0.0291, \lambda_{15} =0.0283, \lambda_{16} = 0.0276,$ \\
			
			&   &  $\lambda_{17}=0.0271, \lambda_{18}=0.0267, \lambda_{19} =0.0264, \lambda_{20} = 0.0261,$ \\ 
			
			&   &  $\lambda_{21}=0.0258, \lambda_{22}=0.0256, \lambda_{23} =0.0254, \lambda_{24} = 0.0253$ \\ 
			
			&   &  $\lambda_{25}=0.0251, \lambda_{26}=0.0249, \lambda_{27} =0.0247, \lambda_{28}=0.0243,$ \\
			
			&   &  $\lambda_{29}=0.0239, \lambda_{30} =0.0231$ \\ \hline
			
		\end{tabular}
		\vskip0.5cm
		\caption{Numerical estimates of the maximum purity of  absolutely PPT qutrit-qudit states.}
	\end{table}
\end{small}

The numerical finding in Table 2 and Conjecture \ref{conj1} for qubit-qudit states lead us to the following conjecture for the qutrit-qudit case.
\begin{conjecture} \label{conj2} The maximum purity of  absolutely PPT states $\rho \in M_3\otimes M_n, n\ge 2,$ is given by
	$$\begin{cases}
		\frac{4}{9n} \ &\text{if}\ n \equiv 0 \mod 4,   \\
		
		\frac{36n+8}{(9n+1)^2} &\text{if}\ n \equiv 1 \mod 4 , \\
		\frac{36n+16}{(9n+2)^2} &\text{if}\ n \equiv 2 \mod 4 , \\
		\frac{36n-8}{(9n-1)^2}  &\text{if}\ n \equiv 3 \mod 4.
	\end{cases}$$
	Moreover, the maximum value is attained when $$\begin{cases}
		\lambda_1= \cdots =\lambda_{\frac{3n}{4}}=\frac{2}{3n}, \lambda_{\frac{3n}{4}+1}=\cdots=\lambda_{3n}=\frac{2}{9n}  \ &\text{if}\ n \equiv 0 \mod 4,   \\
		
		\lambda_1= \cdots =\lambda_{\frac{3n+1}{4}}=\frac{6}{9n+1}, \lambda_{\frac{3n+5}{4}}=\cdots=\lambda_{3n}=\frac{2}{9n+1} \ &\text{if}\ n \equiv 1 \mod 4, \\
		\lambda_1= \cdots =\lambda_{\frac{3n+2}{4}}=\frac{6}{9n+2}, \lambda_{\frac{3n+6}{4}}=\cdots=\lambda_{3n}=\frac{2}{9n+2} \ &\text{if}\ n \equiv 2 \mod 4, \\
		\lambda_1= \cdots =\lambda_{\frac{3n-1}{4}}=\frac{6}{9n-1}, \lambda_{\frac{3n+3}{4}}=\cdots=\lambda_{3n}=\frac{2}{9n-1} \ &\text{if}\ n \equiv 3 \mod 4.
		
	\end{cases}$$ 
\end{conjecture}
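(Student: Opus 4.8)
The plan is to follow the template of the two-qubit argument in Theorem~\ref{purity}: from the matrix inequalities \eqref{Eq3} extract a usable family of scalar inequalities, combine them with the ordering $\lambda_1\ge\cdots\ge\lambda_{3n}$ and the normalization $\sum_{i=1}^{3n}\lambda_i=1$ to produce an upper bound for ${\rm tr}(\rho^2)=\sum_{i=1}^{3n}\lambda_i^2$, and finally exhibit the spectra in the statement as the unique saturating configurations. The conceptual backbone is that ${\rm tr}(\rho^2)$ is convex while the absolutely PPT spectra form a compact convex set, namely the probability simplex intersected with the ordering cone and with the two spectrahedra cut out by \eqref{Eq3}; a convex function on a compact convex set attains its maximum at an extreme point, so the whole problem reduces to locating that extreme point and reading off its eigenvalues.

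The key simplification is that the two matrices in \eqref{Eq3} involve only the three largest eigenvalues $\lambda_1,\lambda_2,\lambda_3$ and the six smallest $\lambda_{3n-5},\dots,\lambda_{3n}$; the remaining ``bulk'' eigenvalues $\lambda_4,\dots,\lambda_{3n-6}$ enter only through the ordering and the trace. I would first freeze these nine active eigenvalues together with the total mass of the bulk and maximize $\sum_{i=4}^{3n-6}\lambda_i^2$ over the order-constrained slice $\lambda_3\ge\lambda_4\ge\cdots\ge\lambda_{3n-6}\ge\lambda_{3n-5}$ of fixed sum. Maximizing a convex function over this polytope forces the optimizer to a vertex, at which the bulk eigenvalues collapse onto the two boundary values $\lambda_3$ and $\lambda_{3n-5}$, with at most one transitional value absorbing the residual mass. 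A further argument, using the $2\times2$ principal-minor constraints of \eqref{Eq3} to squeeze $\lambda_1,\lambda_2,\lambda_3$ down to the upper level and $\lambda_{3n-5},\dots,\lambda_{3n}$ up to the lower level, should then pin the optimizer to a genuine two-level spectrum, matching the form asserted in Conjecture~\ref{conj2}.

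With the two-level structure in hand, writing $k$ for the number of eigenvalues at the upper value reduces the optimization to a one-parameter problem. The binding semidefiniteness condition in \eqref{Eq3} fixes the ratio between the upper and lower levels, and the constraint $\sum_i\lambda_i=1$ then expresses both levels as explicit rational functions of $k$; substituting into ${\rm tr}(\rho^2)$ yields a rational function of $k$ to be maximized over the admissible integer range. Because $k$ must be an integer, rounding the continuous optimizer to the nearest feasible value splits the answer into the four residue classes of $n\bmod 4$, producing exactly the four formulas in the statement, while an endpoint analysis in the spirit of \eqref{Eq5}--\eqref{Eq8} confirms which rounding is optimal in each class.

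The main obstacle, and the reason this case is substantially harder than Theorem~\ref{purity}, is that \eqref{Eq3} is a pair of $3\times3$ semidefiniteness conditions rather than a single scalar inequality. The two $3\times3$ determinants couple the three largest and six smallest eigenvalues in a way not seen by the $2\times2$ minors, so the delicate step is to determine which minor or determinant is actually binding at the optimum, to verify that both matrices in \eqref{Eq3} are genuinely positive semidefinite at each candidate two-level spectrum, and to rule out competing extreme points of the spectrahedron that could carry larger purity. Carrying out this feasibility-and-optimality verification uniformly across the four residue classes---identifying the rank pattern of the extremal matrices and confirming the level ratio it imposes---is the crux on which the whole proof turns; the convexity reduction and the final single-variable optimization are comparatively routine.
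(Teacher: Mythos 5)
You should first be aware that the paper does not prove Conjecture~\ref{conj2}: it is stated purely as a conjecture, supported only by the numerical optimization reported in Table~2, so there is no proof of the paper's to compare your attempt against. Your outline --- reduce to extreme points by convexity of ${\rm tr}(\rho^2)$ and of the spectral feasible set, collapse the bulk eigenvalues onto two levels, let the binding semidefiniteness condition in \eqref{Eq3} fix the ratio of the two levels, then optimize over the integer $k$ --- is a reasonable template modeled on Theorem~\ref{purity}, but every load-bearing step is asserted rather than carried out, and you yourself flag the identification of the binding constraint as the crux.

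The serious problem is that executing that crux step contradicts the statement you are trying to prove. Take a two-level spectrum with $k$ eigenvalues equal to $a$ and $3n-k$ equal to $b<a$. Whenever $3\le k\le 3n-6$ we have $\lambda_1=\lambda_2=\lambda_3=a$ and $\lambda_{3n-5}=\cdots=\lambda_{3n}=b$, so both matrices in \eqref{Eq3} reduce to $(a+b)I+(b-a)J$ with $J$ the all-ones matrix; the eigenvalues are $a+b$ (twice) and $4b-2a$, so positive semidefiniteness forces $a\le 2b$. The spectra in Conjecture~\ref{conj2} all have $a=3b$, so for every $n\ge 4$ in each residue class (where indeed $3\le k\le 3n-6$) the conjectured optimizer violates \eqref{Eq3} and is not absolutely PPT --- e.g.\ for $n=4$ the vector $(1,1,1)$ gives the quadratic form value $3(4b-2a)=-1/3<0$; a direct determinant check likewise disposes of $n=3$. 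Under the actual constraint $a=2b$ the two-level purity is $(3n+3k)/(3n+k)^2$, maximized at $k=n$ with value $3/(8n)$, which reproduces the paper's numerical value $1/16$ for $3\otimes 6$ and already exceeds the reported $0.0910$ for $3\otimes 4$ (where $3/32$ is feasible); the family $k=1$, $a=3b$ (feasible because the LMIs then only see the $2\times2$ block, giving $a\le 3b$) does even better for small $n$ and reproduces the $3\otimes3$ value $17/121\approx 0.1405$. In every case these feasible values sit strictly below the conjectured ones, so the gap the paper attributes to numerical difficulty is, at least in part, the conjecture overshooting. In short: your strategy, completed honestly, would disprove the ``attained when'' clause and very likely the claimed maxima for all $n\ge 3$, leaving only $n=2$ (where $k=2<3$ and the matrices degenerate differently, and which already follows from Theorem~\ref{purity}'s method applied to $2\otimes 3$ via \eqref{Eq2}). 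Before investing further effort you should either repair the target statement or turn the argument above into an explicit refutation.
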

A comparison of the numerical and conjectured values is shown in Figure 2. Due to the increasing dimensionality and complexity of the constraints, the numerical estimates are noticeably smaller than the conjectured values.  

\begin{figure}[h!]
	\centering
	\includegraphics[height=6cm,width=10cm]{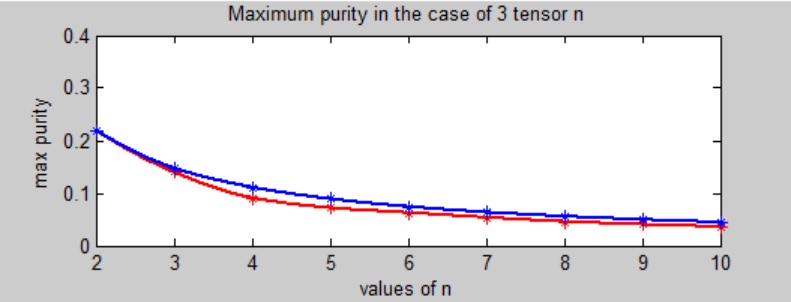}
	\caption{Plot of the maximum purity of absolutely PPT states versus the dimension $n$ in a $3\otimes n$ system, where the red curve represents the numerical values and the  blue curve represents the conjectured values.}\label{Plot_ASEP3d_compare}
\end{figure}
\begin{remark} From the geometric perspective, Conjecture \ref{conj1} and \ref{conj2} can  be reformulated as follows.
	
	\begin{conjecture}  The Euclidean radius of the smallest ball in $M_2 \otimes M_n$ with Hilbert-Schmidt norm and around the maximally mixed  state that encompasses
		all absolutely separable states in a $2\otimes n$ system is given by 
		$$\begin{cases}
			\frac{1}{\sqrt{6n}} \ &\text{if}\ n \ \text{is even,}  \\
			
			\frac{\sqrt{3n^2+2n-1}}{(3n+1)\sqrt{2n}} \ &\text{if}\ n \ \text{is odd.}
		\end{cases}$$
	\end{conjecture}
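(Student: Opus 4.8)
The plan is to reduce this geometric statement to Conjecture~\ref{conj1} by a direct computation relating the Hilbert--Schmidt distance to the purity. The key observation is that the squared distance from any state $\rho \in M_2 \otimes M_n$ to the maximally mixed state $\rho_* = \frac{1}{2n}I$ is a simple affine function of the purity. Since ${\rm tr}(\rho)=1$ and ${\rm tr}(I) = 2n$, I would expand
\begin{equation*}
	\lVert \rho - \rho_* \rVert_{HS}^2 = {\rm tr}\!\left(\left(\rho - \tfrac{1}{2n}I\right)^2\right) = {\rm tr}(\rho^2) - \frac{2}{2n}{\rm tr}(\rho) + \frac{1}{(2n)^2}{\rm tr}(I) = {\rm tr}(\rho^2) - \frac{1}{2n}.
\end{equation*}
Because the map $x \mapsto \sqrt{x - \frac{1}{2n}}$ is strictly increasing, maximizing the distance over the set of absolutely separable states is exactly equivalent to maximizing ${\rm tr}(\rho^2)$ over that set, and the two maxima are attained at precisely the same spectra.

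Next I would argue that the radius of the smallest ball centred at $\rho_*$ that encompasses all absolutely separable states is, by definition, the supremum of $\lVert\rho-\rho_*\rVert_{HS}$ over this set. Combining this with the identity above and writing $P_{\max}$ for the maximum purity of absolutely separable states in the $2\otimes n$ system gives
\begin{equation*}
	R = \sqrt{P_{\max} - \frac{1}{2n}}.
\end{equation*}
Invoking the value of $P_{\max}$ supplied by Conjecture~\ref{conj1} then reduces the proof entirely to algebraic simplification.

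For $n$ even, substituting $P_{\max} = \frac{2}{3n}$ yields $R^2 = \frac{2}{3n} - \frac{1}{2n} = \frac{1}{6n}$, hence $R = \frac{1}{\sqrt{6n}}$. For $n$ odd, substituting $P_{\max} = \frac{6n+4}{(3n+1)^2}$ gives
\begin{equation*}
	R^2 = \frac{6n+4}{(3n+1)^2} - \frac{1}{2n} = \frac{2n(6n+4) - (3n+1)^2}{2n(3n+1)^2} = \frac{3n^2 + 2n - 1}{2n(3n+1)^2},
\end{equation*}
so that $R = \frac{\sqrt{3n^2+2n-1}}{(3n+1)\sqrt{2n}}$, matching the stated expressions. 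The only point demanding any care is clearing denominators in the odd case, where the numerator $2n(6n+4)-(3n+1)^2$ must be expanded correctly to $3n^2+2n-1$; there is no genuine analytic obstacle here, since the full content of the geometric conjecture is already carried by the purity conjecture, and the passage between the two is purely a change of normalization via the monotone correspondence above.
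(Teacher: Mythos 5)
Your derivation is correct and is exactly the reformulation the paper has in mind: the statement is presented as a restatement of Conjecture~\ref{conj1}, and the identity $\lVert\rho-\tfrac{1}{2n}I\rVert_{HS}^2={\rm tr}(\rho^2)-\tfrac{1}{2n}$ together with the algebraic simplifications you carry out (both of which check out) is precisely the omitted passage between the two. Note that the result remains conditional on Conjecture~\ref{conj1}, which the paper does not prove.
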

	
	\begin{conjecture}  The Euclidean radius of the smallest ball in $M_3 \otimes M_n$ with Hilbert-Schmidt norm and around the maximally mixed state  that encompasses all absolutely PPT states in a $3\otimes n$ system  is given by 
		$$\begin{cases}
			\frac{1}{3\sqrt{n}} \ &\text{if}\ n \equiv 0 \mod 4,   \\
			
			\frac{\sqrt{27n^2+6n-1}}{(9n+1)\sqrt{3n}} &\text{if}\ n \equiv 1 \mod 4 , \\
			\frac{\sqrt{27n^2+12n-4}}{(9n+2)\sqrt{3n}} &\text{if}\ n \equiv 2 \mod 4 , \\
			\frac{\sqrt{27n^2-6n-1}}{(9n-1)\sqrt{3n}} &\text{if}\ n \equiv 3 \mod 4.
		\end{cases}$$
	\end{conjecture}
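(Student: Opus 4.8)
The plan is to derive both geometric conjectures from the purity conjectures (Conjecture \ref{conj1} and Conjecture \ref{conj2}) by means of the elementary identity linking Hilbert--Schmidt distance to purity; all the genuine content is already contained in the purity statements, and what remains is to translate them into radii. Since the ball is centered at the \emph{fixed} maximally mixed state $\rho_* = \frac{I}{2n}$ on the $2n$-dimensional space $M_2 \otimes M_n$, the radius of the smallest such ball enclosing a set $S$ of states is simply $R = \sup_{\rho \in S} \|\rho - \rho_*\|_{\mathrm{HS}}$, with no optimization over the center. The first step is therefore to express this distance through the purity. For any density matrix $\rho$ one computes, using ${\rm tr}(\rho) = 1$ and ${\rm tr}(I) = 2n$,
$$\|\rho - \rho_*\|_{\mathrm{HS}}^2 = {\rm tr}\!\big((\rho - \tfrac{I}{2n})^2\big) = {\rm tr}(\rho^2) - \frac{2}{2n}\,{\rm tr}(\rho) + \frac{2n}{(2n)^2} = {\rm tr}(\rho^2) - \frac{1}{2n}.$$

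Second, I would observe that the right-hand side is a strictly increasing function of the purity ${\rm tr}(\rho^2)$, so that maximizing the distance over the set of absolutely separable states is exactly equivalent to maximizing the purity over that same set. Both quantities are unitarily invariant and depend only on the spectrum, and the absolute-separability criterion \eqref{Eq2} is itself a condition on the spectrum; hence the supremum is a genuine maximum, attained at the explicit eigenvalue configurations of Conjecture \ref{conj1}. Consequently $R = \sqrt{P_{\max} - \frac{1}{2n}}$, where $P_{\max}$ is the maximum purity asserted there.

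The final step is purely computational: substitute the two cases of Conjecture \ref{conj1} into $R = \sqrt{P_{\max} - \frac{1}{2n}}$ and simplify. For $n$ even this gives $R^2 = \frac{2}{3n} - \frac{1}{2n} = \frac{1}{6n}$, i.e. $R = \frac{1}{\sqrt{6n}}$. For $n$ odd one finds
$$R^2 = \frac{6n+4}{(3n+1)^2} - \frac{1}{2n} = \frac{2n(6n+4) - (3n+1)^2}{2n(3n+1)^2} = \frac{3n^2 + 2n - 1}{2n(3n+1)^2},$$
which yields exactly $R = \frac{\sqrt{3n^2+2n-1}}{(3n+1)\sqrt{2n}}$. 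The identical argument, with the $2n$-dimensional identity replaced by the $3n$-dimensional one and Conjecture \ref{conj1} replaced by Conjecture \ref{conj2}, establishes the second geometric conjecture: in each of the four residue classes modulo $4$ one subtracts $\frac{1}{3n}$ from the corresponding $P_{\max}$ and simplifies the numerator (for instance $3n(36n+8) - (9n+1)^2 = 27n^2 + 6n - 1$ when $n \equiv 1 \bmod 4$), recovering the stated radii.

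I do not expect any serious obstacle in this reduction itself, as the distance-to-purity identity and its monotonicity are immediate and the simplifications are routine. The true difficulty is inherited entirely from the purity conjectures: the geometric statements are equivalent reformulations and are therefore established only conditionally, to the same extent (proved for the two-qubit case via Theorem \ref{purity}, conjectural otherwise) as Conjecture \ref{conj1} and Conjecture \ref{conj2}.
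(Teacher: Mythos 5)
Your reduction is correct and is exactly what the paper intends: the geometric statement is presented as a reformulation of Conjecture \ref{conj2} via the identity $\|\rho - \tfrac{I}{3n}\|_{\mathrm{HS}}^2 = {\rm tr}(\rho^2) - \tfrac{1}{3n}$, and your algebra in all four residue classes checks out. You also correctly flag that the result remains conditional on the purity conjecture itself, which matches the paper's framing.
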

	
\end{remark}
\section{Conclusions}
In conclusion, we studied the maximum purity of absolutely separable and absolutely PPT states. We provided exact results for two-qubit states and numerical estimates for higher-dimensional systems. Additionally, we proposed conjectures for the maximum purity of absolutely separable qubit-qudit states and absolutely PPT qutrit-qudit states. Our findings can be interpreted geometrically as the determination of the Euclidean radius of the smallest ball encompassing the set of absolutely separable and absolutely PPT states. We conjecture that these sets roughly lie within a ball of Euclidean radius $O(\frac1{\sqrt{n}})$ centered at the maximally mixed state. These results offer insights into the geometric properties of absolutely separable and absolutely PPT states, and their position in the overall state space.

\section*{Acknowledgments}

The second-named author is supported by Vietnam Academy of Science and Technology, grant number NVCC01.12/24-25.

\end{document}